\documentclass[letterpaper,journal]{IEEEtran}
\usepackage{amsmath,amsfonts,amssymb}
\usepackage[ruled,vlined]{algorithm2e}
\usepackage{array}
\usepackage[caption=false,font=normalsize,labelfont=sf,textfont=sf]{subfig}
\usepackage{textcomp}
\usepackage{stfloats}
\usepackage{url}
\usepackage{verbatim}
\usepackage{graphicx}
\usepackage{amsthm}
\usepackage{cite}
\newtheorem{theorem}{Theorem}

\theoremstyle{definition}
\newtheorem{definition}[theorem]{Definition}

\theoremstyle{remark}

\begin{document}
\title{Distributionally Robust Linearly Constrained Minimum Variance Beamforming Under Steering-Vector Mismatch}

\author{Raneem Madani, Abdel Lisser, and Zeno Toffano%
\thanks{Corresponding author: Raneem Madani.}%
\thanks{This work was supported by the French government through the France 2030 investment plan, under the QuanTEdu-France project (grant no. ANR-22-CMAS-0001) and the France 2030 program (reference ANR-11-IDEX-0003) within the OI H-Code.}%
\thanks{This work has been submitted to the IEEE for possible publication.
Copyright may be transferred without notice, after which this version may no
longer be accessible.}%
\thanks{Raneem Madani, Abdel Lisser, and Zeno Toffano are with the Laboratoire des Signaux et Systèmes (L2S), CNRS, CentraleSupélec, Université Paris-Saclay, 91190 Gif-sur-Yvette, France.}%
\thanks{E-mail: \{raneem.madani; abdel.lisser; zeno.toffano\}@centralesupelec.fr}}
\markboth{IEEE Transactions on Signal Processing,~Vol.~XX, No.~XX, 2026}%
{Madani \MakeLowercase{\textit{et al.}}: Distributionally Robust LCMV Beamforming}
\maketitle
\begin{abstract}
Linearly constrained minimum variance (LCMV) beamforming is a fundamental technique for controlling the array response toward multiple desired signals while suppressing interference and noise. In practical systems, however, the steering vectors involved in the response constraints are often affected by direction-of-arrival errors, calibration imperfections, and other array uncertainties. Such mismatches may cause constraint violations and substantial performance degradation. This paper investigates multi-signal LCMV beamforming under steering-vector mismatch and develops probabilistic constraints to guarantee a prescribed reliability level for each desired signal. Several uncertainty-information settings are considered. When the exact mismatch distribution is known and belongs to the class of complex elliptically symmetric distributions (CES), a distribution-informed convex safe reformulation is derived using exact CES quantiles. When the mismatch distribution is unknown and only partial statistical information or empirical data is available, robust probabilistic reformulations are developed under moment-based ambiguity, support-based uncertainty, and Wasserstein data-driven ambiguity, and resulting deterministic formulations can be expressed as convex second-order cone programs. Numerical experiments show that the proposed designs improve out-of-sample reliability and reduce beam-gain sensitivity to steering-vector mismatch, while maintaining competitive output signal to interference plus noise (SINR) ratio and robust beampatterns.
\end{abstract}
\begin{IEEEkeywords}
LCMV adaptive beamforming, steering-vector mismatch, chance constraints, distributionally robust optimization, Wasserstein ball.
\end{IEEEkeywords}
\section{Introduction}

\IEEEPARstart{A}{daptive} beamforming is a fundamental technique in array signal processing, with applications in radar, sonar, wireless communications, microphone arrays, medical imaging, radio astronomy, and spatial filtering. Among classical adaptive beamformers, the linearly constrained minimum variance (LCMV) beamformer is particularly important because it minimizes the array output interference-plus-noise power while enforcing prescribed responses in one or several look directions. The LCMV principle originates from linearly constrained adaptive array processing, in which linear equality constraints are imposed to preserve the desired responses while the remaining degrees of freedom are adapted to suppress interference and noise~\cite{frost1972algorithm}. The minimum variance distortionless response (MVDR) beamformer is a special case of LCMV in which a single distortionless-response constraint is imposed in the desired look direction~\cite{1449208}. In practice, however, the performance of LCMV and MVDR-type beamformers strongly depends on the accuracy of the steering vectors used in the constraints. Direction-of-arrival errors, array calibration imperfections, sensor position perturbations, mutual coupling, local scattering, and propagation uncertainties may produce steering-vector mismatch. Even small mismatches can lead to signal self-nulling, violations of the desired-response constraints, and severe degradation of the output signal-to-interference-plus-noise ratio (SINR).

A large body of work has therefore been devoted to robust adaptive beamforming~\cite{1165054}. Classical robustness mechanisms include diagonal loading, eigenspace-based methods, covariance regularization, steering-vector estimation, and deterministic worst-case optimization. Worst-case robust beamforming has been particularly influential: the steering-vector mismatch is assumed to belong to a deterministic uncertainty set, and the beamformer is designed to guarantee performance for every admissible mismatch realization~\cite{vorobyov2003robust,kim2008robust}. Convex optimization has played a central role in making such robust designs tractable, notably through second-order cone programming (SOCP) and semidefinite programming (SDP) reformulations~\cite{5447076}. More recent studies have extended worst-case SINR maximization to broader and potentially nonconvex uncertainty sets~\cite{huang2023robust}. Although these approaches provide strong deterministic guarantees, they may become conservative when the mismatch is random and the worst-case realization has a very low probability of occurrence.

To reduce this conservatism, probabilistic robust beamforming requires the desired response to be maintained with a prescribed probability rather than for every possible mismatch realization~\cite{vorobyov2005probability,vorobyov2008relationship}. This approach relates the robustness level to an admissible outage probability and is therefore more flexible than purely worst-case designs. Other studies model the steering-vector mismatch as a random vector and derive tractable deterministic approximations under Gaussian or related stochastic assumptions~\cite{554254,liao2016robust}.

Recent work on complex-valued chance-constrained programs (3CPs) has established tractable deterministic reformulations for linear probabilistic constraints involving the real part of uncertain complex affine expressions~\cite{10.1007/978-3-032-13589-6}. Under a known probability distribution, these reformulations generally require a detailed statistical characterization of the uncertainty, including its mean, covariance, and pseudo-covariance. Distributionally robust extensions have also been developed using partial statistical information~\cite{madani2026distributionallyrobustcomplexchanceconstrained}, as well as empirical data through Wasserstein ambiguity sets~\cite{madani2026wasserstein}. 

Among data-driven ambiguity sets, Wasserstein balls are attractive because of their statistical motivation, out-of-sample guarantees, and strong dual representations~\cite{MohajerinEsfahani2018,8814677}. In contrast, \(\phi\)-divergence ambiguity sets~\cite{bayraksan2015data} characterize distributional uncertainty by reweighting the empirical samples. This is appropriate when the observed support is trusted but is less flexible when unseen perturbations must be represented. Wasserstein ambiguity sets address this limitation by allowing probability mass to move toward nearby realizations, making them particularly suitable for data-driven problems involving perturbations of complex-valued observations.

Despite these advances, one main gap remains: existing probability-constrained robust beamforming methods predominantly address single-response MVDR formulations and do not provide a multi-signal LCMV framework with an explicit reliability requirement for each desired response. The main contributions of this paper are summarized as follows:
\begin{itemize}
\item We formulate a probabilistic multi-signal LCMV beamforming problem under steering-vector mismatch. An individual chance constraint is imposed for each desired signal to guarantee its prescribed reliability level.

\item We derive a tractable deterministic reformulation when the steering-vector mismatch follows a known complex elliptically symmetric distribution, a broad family encompassing important models such as the circular complex Gaussian, complex Student-\(t\), generalized Gaussian, and Laplace distributions.

\item We develop distributionally robust reformulations under moment-based and support-based ambiguity sets. The moment-based formulation uses available first- and second-order statistical information without assuming a specific probability distribution, while the support-based formulation exploits known bounds on the steering-vector mismatch.

\item We introduce a data-driven distributionally robust LCMV formulation based on Wasserstein ambiguity sets, allowing the mismatch distribution to be learned from samples rather than assumed to be exactly known. A worst-case conditional value-at-risk approximation is employed to obtain a tractable conservative reformulation of the resulting distributionally robust chance constraints.

\item Numerical experiments show that the proposed designs improve out-of-sample reliability and reduce sensitivity to steering-vector mismatch while maintaining competitive output SINR. They further demonstrate that, even with a limited number of samples, the Wasserstein data-driven formulation achieves performance close to that obtained when the exact mismatch distribution is known.

\end{itemize}

The remainder of this paper is organized as follows. Section~\ref{sec2} presents the system model and the nominal LCMV formulation for multiple desired signals. Section~\ref{sec3} develops the probabilistic and distributionally robust formulations under exact distributional, moment-based, support-based, and Wasserstein uncertainty descriptions, together with their tractable reformulations and solution procedures. Section~\ref{sec4} reports the numerical results. Section~\ref{sec5} concludes the paper.

\textit{Notations:}
We denote by \(\mathbb{C}^n\) the space of complex-valued vectors of dimension \(n\). For a complex vector \(z\in\mathbb{C}^n\), \(\bar{z}\), \(z^T\), and \(z^H\) denote its conjugate, transpose, and Hermitian transpose, respectively. The real and imaginary parts of a complex scalar or vector are denoted by \(\operatorname{Re}(\cdot)\) and \(\operatorname{Im}(\cdot)\), respectively. The modulus of a complex scalar is denoted by \(|\cdot|\), while the Euclidean norm of a vector is denoted by \(\|\cdot\|_2\). The identity matrix of size \(M\) is denoted by \(I_M\). For any positive integer \(N\), we define \([N]=\{1,\ldots,N\}\). The expectation and probability operators are denoted by \(\mathbb{E}[\cdot]\) and \(\mathbb{P}[\cdot]\), respectively. For a random vector \(\xi\), \(\mu_\xi=\mathbb{E}[\xi]\) denotes its mean, and \(\operatorname{Cov}(\xi)=\mathbb{E}\left[(\xi-\mu_\xi)(\xi-\mu_\xi)^H\right]\) denotes its covariance matrix. For a Hermitian matrix \(A\), \(A\succeq 0\) indicates that \(A\) is positive semidefinite.
\section{System Model and LCMV Formulation}
\label{sec2}
Consider a narrowband array with $M$ sensors. At time instant $k$, the received snapshot is denoted by $x(k)\in\mathbb{C}^{M}$. The array output is
\[
    y(k)=w^H x(k),
\]
where $w\in\mathbb{C}^{M}$ is the beamforming vector. Assume that $L$ desired narrowband signals impinge on the array from directions $\{\theta_\ell\}_{\ell\in[L]}$. The corresponding nominal steering vectors are
\[
    \hat a_\ell=a(\theta_\ell)\in\mathbb{C}^{M},\qquad \ell\in[L].
\]
The received signal is modeled as
\[
    x(k)=\sum_{\ell\in[L]}s_\ell(k)a_\ell+i(k)+n(k),
\]
where $s_\ell(k)$ is the $\ell$th desired source waveform, $a_\ell$ is the corresponding actual steering vector, $i(k)$ denotes the aggregate interference component, and $n(k)$ denotes additive noise. The received covariance matrix is
\[
    R_x=\mathbb{E}\left[x(k)x^H(k)\right].
\]
In practice, $R_x$ is estimated from $K$ snapshots as
\[
    \hat R_x=\frac{1}{K}\sum_{k\in[K]}x(k)x^H(k).
\]
The beamformer output power is
\[
    \mathbb{E}\left[|y(k)|^2\right]=w^H R_x w,
\]
or $w^H\hat R_xw$ when the sample covariance matrix is used. Let
\[
    C=\begin{bmatrix}
    \hat a_1 & \hat a_2 & \cdots & \hat a_L
    \end{bmatrix}\in\mathbb{C}^{M\times L}
\]
be the nominal constraint matrix, and let $f=\mathbf{1}_L$ be the desired response vector. To evaluate the signal preservation and interference suppression capability of a beamformer, we use the output signal-to-interference-plus-noise ratio (SINR). Let $R_s$ denote the desired-signal covariance matrix and let $R_{i+n}$ denote the interference-plus-noise covariance matrix. Then, the output SINR is defined as
\[
    {\rm SINR}_{\rm out}(w) = \frac{w^H R_s w}{w^H R_{i+n} w}.
\]
In the multi-signal case, if the desired sources are mutually uncorrelated with powers $\{\sigma_{s,\ell}^2\}_{\ell\in[L]}$, then
\[
    R_s= \sum_{\ell\in[L]}\sigma_{s,\ell}^2 a_\ell a_\ell^H .
\]
Similarly, $R_{i+n}$ collects the covariance contributions of the interference and additive noise components. The direct SINR maximization problem can be written as
\[
    \max_{w\in\mathbb{C}^{M}}\frac{w^H R_s w}{w^H R_{i+n}w}.
\]
The LCMV beamformer is closely related to this problem. Instead of maximizing the fractional objective directly, LCMV fixes the desired array responses and minimizes the output power. The LCMV problem is
\begin{equation}
\begin{aligned}
    \min_{w\in\mathbb{C}^{M}}\quad & w^H\hat R_x w\\
    \text{s.t.}\quad & C^H w=f .
\end{aligned}
\label{eq:nominal_lcmv}
\end{equation}
Equivalently, $\hat a_\ell^H w=1$ for all $\ell\in[L]$. If the desired-signal covariance matrix can be written as
\[
R_s=C R_d C^H,
\]
where $R_d$ is the covariance matrix of the desired source waveforms, then under the LCMV constraint $C^H w=f$, we have
\[
    w^H R_s w= w^H C R_d C^H w= f^H R_d f,
\]
which is independent of $w$. Consequently, maximizing the output SINR is equivalent to minimizing the interference-plus-noise power $w^H R_{i+n}w$ under the response constraints. Since $R_x=R_s+R_{i+n}$ and the desired-signal contribution is fixed by the LCMV constraints, minimizing $w^H R_xw$ is equivalent to minimizing $w^H R_{i+n}w$. Thus, the LCMV beamformer can be interpreted as a constrained SINR-maximizing design.

When $\hat R_x$ is positive definite and $C$ has full column rank, the closed-form LCMV solution is
\[
    w_{\rm LCMV}=\hat R_x^{-1}C\left(C^H\hat R_x^{-1}C\right)^{-1}f .
\]

\section{Probability-Constrained LCMV Beamforming}\label{sec3}

In practical systems, the nominal steering vector $\hat a_\ell$ may differ from the actual steering vector $a_\ell$ due to direction-of-arrival estimation errors. We model this mismatch as
\[
    a_\ell=\hat a_\ell+\delta_\ell,\qquad \ell\in[L],
\]
where $\delta_\ell\in\mathbb{C}^{M}$ is a random steering-vector mismatch. Therefore, the nominal constraint $\hat a_\ell^H w=1$ does not guarantee that the actual response $|a_\ell^H w|$ remains above the desired level. Following robust beamforming formulations \cite{vorobyov2008relationship}, we impose
\begin{equation}
\begin{aligned}
    \min_{w\in\mathbb{C}^{M}}\quad
    & w^H\hat R_x w\\
    \text{s.t.}\quad
    &
    \mathbb{P}\left[\left|(\hat a_\ell+\delta_\ell)^H w\right|\ge 1 \right]\ge p_\ell, \qquad \ell\in[L].
\end{aligned}
\label{eq:chance_lcmv}
\end{equation}
where $p_\ell\in(0,1)$ is the prescribed reliability level. Using the reverse triangle inequality, we have
\[
    \left|(\hat a_\ell+\delta_\ell)^H w\right|\ge \left|\hat a_\ell^H w\right| - \left|\delta_\ell^H w\right|.
\]
Therefore, the event $\left|\delta_\ell^H w\right| \le \left|\hat a_\ell^H w\right|-1$ implies $\left|(\hat a_\ell+\delta_\ell)^H w\right|\ge 1$. Using the reverse triangle inequality together with $|z|\ge \operatorname{Re}(z)$ for any $z\in\mathbb{C}$, we obtain
\begin{equation}
\left|(\hat a_\ell+\delta_\ell)^H w\right|
\ge \left|\hat a_\ell^H w\right|-\left|\delta_\ell^H w\right|
\ge \operatorname{Re}(\hat a_\ell^H w)-\left|\delta_\ell^H w\right|
\end{equation}
Therefore, a sufficient condition for \eqref{eq:chance_lcmv} is
\begin{equation}
    \mathbb{P}\left[\left|\delta_\ell^H w\right|\le\operatorname{Re}(\hat a_\ell^H w)-1\right]\ge p_\ell, \qquad \ell\in[L].
\label{eq:phase_safe_chance}
\end{equation}
The following subsections provide deterministic safe reformulations of \eqref{eq:phase_safe_chance} under different assumptions on the steering-vector mismatch.

\subsection{Exact CES Mismatch}
We start with the first case, where the exact distribution of the steering vector is known and follows one of the complex elliptical symmetric distributions.

\begin{definition}[CES Distribution {\cite{1502990,6263313}}]
A random vector \(\delta_\ell\in\mathbb{C}^M\) is said to follow a complex elliptically symmetric distribution with location parameter \(\mu_\ell\), covariance matrix \(\Gamma_\ell\), and characteristic generator \(\psi\), written as \(\delta_\ell\sim\operatorname{CES}(\mu_\ell,\Gamma_\ell,\psi)\), if its characteristic function is of the form
\begin{align*}
\Xi_\delta(c)=\exp\!\big(i\,\mathrm{Re}(c^H\mu_\ell)\big)\psi\!\left(c^H\Gamma_\ell c\right).
\end{align*}
\end{definition}
\begin{theorem}
Let $\delta_\ell\sim\operatorname{CES}(\mu_\ell,\Gamma_\ell,\psi)$, and 
let \(k_p=Q_{|\xi|}(p)\), where \(\xi\sim\operatorname{CES}(0,1,\psi)\), and \(Q_{|\xi|}(p)\) denotes the \(p\)-quantile of \(|\xi|\). Then the LCMV formulation can be reformulated as the following convex constraint 
\begin{align*}
    &\min_{w\in\mathbb{C}^{M}}\quad  w^H\hat R_x w\\
    &\text{s.t.}\quad
    |\mu_\ell^H w|+ k_{p_\ell}\left\|\Gamma_\ell^{1/2}w\right\|_2 \le\operatorname{Re}(\hat a_\ell^H w)-1,\quad \ell\in[L].
\end{align*}
\end{theorem}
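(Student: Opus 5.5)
The plan is to show that the displayed deterministic constraint implies the safe chance constraint \eqref{eq:phase_safe_chance}, and hence \eqref{eq:chance_lcmv}. It is therefore a conservative, convex inner approximation. The word ``reformulated'' should be read in this sense, because the term $|\mu_\ell^H w|$ enters through a triangle inequality.

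First I reduce to a scalar random variable. Write $\delta_\ell=\mu_\ell+\tilde\delta_\ell$ with $\tilde\delta_\ell\sim\operatorname{CES}(0,\Gamma_\ell,\psi)$. For fixed $w$, the scalar $\tilde\delta_\ell^H w$ has characteristic function obtained by substituting $c=\bar t\,w$ (up to conjugation conventions) into the CES characteristic function. This gives $\psi(|t|^2 w^H\Gamma_\ell w)$, so $\tilde\delta_\ell^H w \overset{d}{=} \|\Gamma_\ell^{1/2}w\|_2\,\xi$ with $\xi\sim\operatorname{CES}(0,1,\psi)$. This is the closure of the CES family under linear maps. The conjugation $\delta^H w$ versus $w^H\delta$ does not affect the modulus. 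Consequently
\[
\mathbb{P}\big[|\tilde\delta_\ell^H w|\le k_{p_\ell}\|\Gamma_\ell^{1/2}w\|_2\big]=\mathbb{P}\big[|\xi|\le k_{p_\ell}\big]\ge p_\ell
\]
by the definition of the quantile $k_p=Q_{|\xi|}(p)$. If $\Gamma_\ell^{1/2}w=0$, then $\tilde\delta_\ell^Hw=0$ almost surely and the bound is trivial.

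Second, I use the triangle inequality $|\delta_\ell^H w|\le |\mu_\ell^H w|+|\tilde\delta_\ell^H w|$. Whenever the deterministic constraint holds, the event $\{|\tilde\delta_\ell^H w|\le k_{p_\ell}\|\Gamma_\ell^{1/2}w\|_2\}$ is contained in $\{|\delta_\ell^H w|\le \operatorname{Re}(\hat a_\ell^H w)-1\}$. Hence the latter event has probability at least $p_\ell$, which is \eqref{eq:phase_safe_chance}. The reverse-triangle argument preceding the theorem then yields \eqref{eq:chance_lcmv}.

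Third, I check convexity. Both $|\mu_\ell^H w|$ and $\|\Gamma_\ell^{1/2}w\|_2$ are norms of linear maps of $w$, so they are convex. The quantile satisfies $k_{p_\ell}\ge 0$ because $|\xi|\ge0$, so the left side is convex. The right side is real-affine in $(\operatorname{Re}w,\operatorname{Im}w)$. The constraint is therefore a second-order cone constraint, and with the convex quadratic objective $w^H\hat R_x w$ (using $\hat R_x\succeq0$) the problem is convex.

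The main obstacle is the first step: rigorously justifying the stochastic representation of $\tilde\delta_\ell^Hw$ from the characteristic-function definition. I would verify it directly by computing $\mathbb{E}[\exp(i\operatorname{Re}(\bar t\,\tilde\delta_\ell^Hw))]$ and matching it to $\psi(|t|^2 w^H\Gamma_\ell w)$. I would also handle the normalization conventions of $\Gamma_\ell$ as a scatter matrix versus a covariance matrix, which only rescale $k_p$ consistently.
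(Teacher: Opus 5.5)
Your proposal is correct and follows essentially the same route as the paper. You use CES closure under linear maps to write $(\delta_\ell-\mu_\ell)^Hw \overset{d}{=} \|\Gamma_\ell^{1/2}w\|_2\,\xi$, apply the quantile $k_{p_\ell}$, and absorb $|\mu_\ell^Hw|$ through the triangle inequality; your treatment of the degenerate case $\Gamma_\ell^{1/2}w=0$ (where the paper's normalization divides by zero) and your explicit convexity check are small but welcome additions.
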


\begin{proof}
Since $\delta_\ell\sim\operatorname{CES}(\mu_\ell,\Gamma_\ell,\psi)$, CES distributions are closed under deterministic linear transformations, for any fixed \(w\in\mathbb{C}^M\), $w^H(\delta_\ell-\mu_\ell)\sim\operatorname{CES} \left(0,w^H\Gamma_\ell w,\psi \right)$. The normalized scalar random variable
\[
\xi_\ell=\frac{w^H(\delta_\ell-\mu_\ell)}{\sqrt{w^H\Gamma_\ell w}}
\]
satisfies $\xi_\ell\sim\operatorname{CES}(0,1,\psi)$. Since $\sqrt{w^H\Gamma_\ell w}=\left\|\Gamma_\ell^{1/2}w\right\|_2$, we obtain
\[
|(\delta_\ell-\mu_\ell)^H w|=\left\|\Gamma_\ell^{1/2}w\right\|_2|\xi_\ell|.
\]
Using the definition $k_{p_\ell}=Q_{|\xi|}(p_\ell), \xi\sim\operatorname{CES}(0,1,\psi)$, it follows that
\[
\mathbb{P}\left[|(\delta_\ell-\mu_\ell)^H w|\le k_{p_\ell}\left\|\Gamma_\ell^{1/2}w\right\|_2\right]=\mathbb{P}\left[|\xi_\ell|\le k_{p_\ell}\right]\nonumber\ge p_\ell.
\]
Using $\delta_\ell^H w=\mu_\ell^H w+(\delta_\ell-\mu_\ell)^H w$, and the triangle inequality, we obtain
\[
|\delta_\ell^H w|\le|\mu_\ell^H w|+|(\delta_\ell-\mu_\ell)^H w|.
\]
Consequently,
\[
\mathbb{P}\Big[|\delta_\ell^H w|\le|\mu_\ell^H w|+k_{p_\ell}\left\|\Gamma_\ell^{1/2}w\right\|_2\Big]\ge p_\ell.
\]
Therefore, the chance constraint is safely enforced if
\[
|\mu_\ell^H w|+k_{p_\ell}\left\|\Gamma_\ell^{1/2}w\right\|_2 \le\operatorname{Re}(\hat a_\ell^H w)-1.
\]
\end{proof}

\begin{table}[t!]
\centering
\caption{Exact quantiles of $|\xi|$, where
$\xi\sim\operatorname{CES}(0,1,\psi)$.}
\label{tab:norm_quantiles_ces}
\resizebox{\columnwidth}{!}{%
\begin{tabular}{|l|l|l|}
\hline
Distribution
& Distribution of $|\xi|^2$
& $Q_{|\xi|}(p)$
\\
\hline
Gaussian
& $\operatorname{Gam}(1,1)$
& $\sqrt{-\log(1-p)}$
\\
\hline
Student-$t$, $\nu>2$
& $\dfrac{\nu-2}{\nu}F_{2,\nu}$
& $\displaystyle\sqrt{\frac{\nu-2}{2}\left[(1-p)^{-2/\nu}-1\right]}$
\\
\hline
Generalized Gaussian
& $G^{1/s}$, $G\sim\operatorname{Gam}\left(\dfrac{1}{s},b\right)$
& $\displaystyle\left(Q_{\operatorname{Gam}(1/s,b)}(p)\right)^{1/(2s)}$
\\
\hline
Laplace
& $G^2$, $G\sim\operatorname{Gam}(2,b)$
& $\displaystyle Q_{\operatorname{Gam}(2,b)}(p)$
\\
\hline
$W$-distribution
& $G^{1/s}$, $G\sim\operatorname{Gam}(1,b)$
& $\displaystyle\left[-b\log(1-p)\right]^{1/(2s)}$
\\
\hline
\end{tabular}%
}
\end{table}

The distribution of the squared amplitude
$|\xi|^2=R^2$ can be computed explicitly in several important cases. Table~\ref{tab:norm_quantiles_ces} lists univariate covariance-normalized special cases derived from the CES models summarized in \cite{6263313}. Here, $\nu>2$ denotes the degrees of freedom of the complex Student-$t$ distribution, $s>0$ denotes the exponent parameter, and $b>0$ denotes the scale parameter. We write $\operatorname{Gam}(\alpha,\varpi)$ for the Gamma distribution with shape $\alpha$ and scale $\varpi$, and $F_{d_1,d_2}$ for the Fisher distribution with degrees of freedom $d_1$ and $d_2$.

\subsection{Moment-Based Worst-Case Reformulation}
We now consider the case in which the exact distribution of \(\delta_\ell\) is unknown, while its first two moments are available. For each desired signal \(\ell\in[L]\), we define the moment-based ambiguity set
\[
\mathcal{D}_\ell=\left\{\mathbb{P}\;\middle|\;\mathbb{E}_{\mathbb{P}}[\delta_\ell]=\mu_\ell,\;\operatorname{Cov}_{\mathbb{P}}(\delta_\ell)=\Gamma_\ell\right\}.
\]
The corresponding distributionally robust chance constraint requires the prescribed response reliability to hold for every probability distribution in \(\mathcal{D}_\ell\), namely,
\[
\inf_{\mathbb{P}\in\mathcal{D}_\ell}\mathbb{P}\left[\left|\delta_\ell^H w\right|\le\operatorname{Re}(\hat a_\ell^H w)-1    \right]\ge p_\ell,\qquad \ell\in[L].
\]
This formulation provides a distribution-free guarantee based solely on the available mean and covariance information.
\begin{theorem}
For any distribution with mean $\mu_\ell$ and covariance $\Gamma_\ell$, the chance constraint
\[
\inf_{\mathbb{P}\in\mathcal{D}_\ell}
\mathbb{P}\left[\left|\delta_\ell^H w\right|
\le \operatorname{Re}(\hat a_\ell^H w)-1\right]\ge p_\ell
\]
is safely enforced by the following convex constraint:
\begin{align*}
\frac{1}{\sqrt{1-p_\ell}}
\sqrt{\left|\mu_\ell^H w\right|^2+\left\|\Gamma_\ell^{1/2}w\right\|_2^2}
\le \operatorname{Re}(\hat a_\ell^H w)-1.
\end{align*}
\end{theorem}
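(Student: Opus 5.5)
The natural route is Markov's inequality applied to the second moment of the scalar complex random variable $z_\ell=\delta_\ell^H w$. Fix $w$ satisfying the stated convex constraint and set
\[
t_\ell=\operatorname{Re}(\hat a_\ell^H w)-1 .
\]
The constraint forces $t_\ell\ge \frac{1}{\sqrt{1-p_\ell}}\sqrt{|\mu_\ell^H w|^2+\|\Gamma_\ell^{1/2}w\|_2^2}\ge 0$.

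If $t_\ell=0$, then $|\mu_\ell^Hw|=0$ and $w^H\Gamma_\ell w=0$. Hence $\mathbb{E}_{\mathbb{P}}|z_\ell|^2=0$, so $z_\ell=0$ almost surely and the event $|z_\ell|\le 0$ has probability one. It therefore suffices to treat $t_\ell>0$.

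\textbf{Step 1: the second moment is fixed across the ambiguity set.} For every $\mathbb{P}\in\mathcal{D}_\ell$, I would compute $\mathbb{E}_{\mathbb{P}}|z_\ell|^2$ from the first two moments alone. Write $z_\ell=\mu_\ell^H w+(\delta_\ell-\mu_\ell)^H w$. The cross terms vanish in expectation because $\mathbb{E}_{\mathbb{P}}[\delta_\ell-\mu_\ell]=0$. The remaining term is
\[
\mathbb{E}_{\mathbb{P}}\big|(\delta_\ell-\mu_\ell)^Hw\big|^2=w^H\operatorname{Cov}_{\mathbb{P}}(\delta_\ell)\,w=w^H\Gamma_\ell w=\|\Gamma_\ell^{1/2}w\|_2^2 .
\]
Hence
\[
\mathbb{E}_{\mathbb{P}}|z_\ell|^2=|\mu_\ell^Hw|^2+\|\Gamma_\ell^{1/2}w\|_2^2 ,
\]
which is the same value for all $\mathbb{P}\in\mathcal{D}_\ell$. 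Only the Hermitian covariance enters, since $|z|^2=z\bar z$; no pseudo-covariance information is needed. This is the step I would check most carefully.

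\textbf{Step 2: Markov bound.} Apply Markov's inequality to the nonnegative variable $|z_\ell|^2$. This gives
\[
\mathbb{P}\big[|z_\ell|>t_\ell\big]\le \frac{|\mu_\ell^Hw|^2+\|\Gamma_\ell^{1/2}w\|_2^2}{t_\ell^2}\le 1-p_\ell .
\]
The last inequality is exactly the squared form of the stated constraint. Taking complements and then the infimum over $\mathcal{D}_\ell$ yields the distributionally robust chance constraint.

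\textbf{Step 3: convexity.} The left-hand side of the constraint is $\frac{1}{\sqrt{1-p_\ell}}\big\|[\,\mu_\ell^Hw;\ \Gamma_\ell^{1/2}w\,]\big\|_2$. This is the Euclidean norm of a vector that is linear in $w$. The right-hand side is affine in the real and imaginary parts of $w$. The constraint is therefore a second-order cone constraint and hence convex.

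I do not expect any step to be a genuine obstacle. The only delicate point is the moment identity in Step 1, namely that the covariance alone determines $\mathbb{E}|\delta_\ell^Hw|^2$, together with the degenerate case $t_\ell=0$ handled at the outset.
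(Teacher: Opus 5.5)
Your proposal is correct and follows the paper's proof: you compute the second-moment identity $\mathbb{E}|\delta_\ell^Hw|^2=|\mu_\ell^Hw|^2+\|\Gamma_\ell^{1/2}w\|_2^2$ and then apply Markov's inequality to $|\delta_\ell^Hw|^2$ with $t_\ell=\operatorname{Re}(\hat a_\ell^Hw)-1$. Your separate treatment of the degenerate case $t_\ell=0$ and your explicit second-order-cone argument for convexity are small additions that the paper leaves implicit.
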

\begin{proof}
Using the second-order moment information, we obtain
\begin{align*}
\mathbb E\left[\left|\delta_\ell^H w\right|^2\right]
&=\mathbb E\left[\left|(\delta_\ell-\mu_\ell)^H w+\mu_\ell^H w\right|^2\right]\\
&=\left|\mu_\ell^H w\right|^2+\left\|\Gamma_\ell^{1/2}w\right\|_2^2.
\end{align*}
Let $t_\ell=\operatorname{Re}(\hat a_\ell^H w)-1$. By Markov's inequality, for $t_\ell>0$,
\[
\mathbb P\left[\left|\delta_\ell^H w\right|>t_\ell\right]
\le
\frac{\left|\mu_\ell^H w\right|^2+\left\|\Gamma_\ell^{1/2}w\right\|_2^2}{t_\ell^2}.
\]
Therefore, if
\[
t_\ell\ge
\frac{1}{\sqrt{1-p_\ell}}
\sqrt{\left|\mu_\ell^H w\right|^2+\left\|\Gamma_\ell^{1/2}w\right\|_2^2},
\]
then $\mathbb P\left[\left|\delta_\ell^H w\right|>t_\ell\right]\le1-p_\ell$.
Equivalently,
\[
\mathbb P\left[\left|\delta_\ell^H w\right|
\le
\operatorname{Re}(\hat a_\ell^H w)-1\right]\ge p_\ell,
\]
which proves the result.
\end{proof}
\subsection{Support-Based LCMV with Componentwise Disk Bounds}

Assume that the steering-vector mismatch satisfies $\delta_\ell=\mu_\ell+u_\ell$, where $u_\ell\in\mathbb{C}^M$ is zero-mean and its independent entries satisfy $|u_{\ell k}|\le l_{\ell k}$ almost surely. Define
\begin{align*}
&\mathcal D_\ell^{\mathrm{disk}}:=\left\{\mathbb P:\mathbb E_{\mathbb P}[u_\ell]=0,\ u_{\ell k},\ |u_{\ell k}|\le l_{\ell k},\ k\in[M]\right\}.
\end{align*}
\begin{theorem}
For any $p_\ell\in(0,1)$, the distributionally robust chance constraint
\[
\inf_{\mathbb P\in\mathcal D_\ell^{\mathrm{disk}}}
\mathbb P\left[\left|(\hat a_\ell+\delta_\ell)^Hw\right|\ge1\right]\ge p_\ell
\]
is safely enforced by the convex constraint
\begin{align*}
k_{p_\ell}\sqrt{\sum_{k=1}^{M}l_{\ell k}^2|w_k|^2}
\le \operatorname{Re}\left((\hat a_\ell+\mu_\ell)^Hw\right)-1,
\end{align*}
where $k_{p_\ell}=\sqrt{-2\log(1-p_\ell)}$.
\end{theorem}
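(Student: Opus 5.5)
Following the same route as the two preceding theorems, I will first reduce the chance constraint to a tail bound for a single scalar. Write $a_\ell=(\hat a_\ell+\mu_\ell)+u_\ell$. The reverse triangle inequality together with $|z|\ge\operatorname{Re}(z)$ gives
\[
\left|(\hat a_\ell+\delta_\ell)^Hw\right|\ge \operatorname{Re}\left((\hat a_\ell+\mu_\ell)^Hw\right)+\operatorname{Re}(u_\ell^Hw).
\]
It therefore suffices to guarantee $\mathbb P[\operatorname{Re}(u_\ell^Hw)\ge -t_\ell]\ge p_\ell$ with $t_\ell:=\operatorname{Re}((\hat a_\ell+\mu_\ell)^Hw)-1$. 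Here I deliberately keep only the real part $\operatorname{Re}(u_\ell^Hw)$, not the modulus $|u_\ell^Hw|$. This one-sided real scalar is what lets a sub-Gaussian bound produce exactly the constant $\sqrt{-2\log(1-p_\ell)}$. Bounding the modulus instead would cost a factor of $2$ in the probability, because one would have to control both the real and imaginary parts. The constant in the theorem suggests the one-sided real route.

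Next I will write $S:=\operatorname{Re}(u_\ell^Hw)=\sum_{k=1}^M X_k$, where $X_k=\operatorname{Re}(\bar u_{\ell k}w_k)$. The $X_k$ are independent because the entries $u_{\ell k}$ are. They have zero mean because $\mathbb E[u_\ell]=0$. They are bounded, since $|X_k|\le |u_{\ell k}||w_k|\le l_{\ell k}|w_k|$, so each $X_k$ lies in the interval $[-l_{\ell k}|w_k|,\,l_{\ell k}|w_k|]$. Hoeffding's lemma then gives, for every $\lambda\in\mathbb R$,
\[
\mathbb E[e^{-\lambda X_k}]\le \exp\!\left(\tfrac{\lambda^2 (2l_{\ell k}|w_k|)^2}{8}\right)=\exp\!\left(\tfrac{\lambda^2 l_{\ell k}^2|w_k|^2}{2}\right).
\]
Multiplying these bounds over $k$ and applying Chernoff's bound optimized in $\lambda$ yields
\[
\mathbb P[S< -t]\le \exp\!\left(-\frac{t^2}{2\sigma_w^2}\right),\qquad \sigma_w^2=\sum_{k}l_{\ell k}^2|w_k|^2 .
\]

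To finish, I set $t=k_{p_\ell}\sigma_w$ with $k_{p_\ell}=\sqrt{-2\log(1-p_\ell)}$. The tail bound then equals $1-p_\ell$ for every $\mathbb P\in\mathcal D_\ell^{\mathrm{disk}}$. So whenever $t_\ell\ge k_{p_\ell}\sigma_w$ holds, which is exactly the stated constraint, the infimum over the ambiguity set is at least $p_\ell$.

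Two details remain. First, the constraint also forces $t_\ell\ge0$, so the Chernoff step is legitimate. Second, convexity needs to be checked: the left-hand side is the Euclidean norm of the vector $(l_{\ell k}w_k)_k$, and the right-hand side is affine in $(\operatorname{Re}w,\operatorname{Im}w)$, so the constraint is a second-order cone constraint.

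The main obstacle is getting the constant right. Hoeffding's lemma must be applied with range width $2l_{\ell k}|w_k|$, which gives variance proxy $l_{\ell k}^2|w_k|^2$, and the tail must be taken one-sided on the real part. I will also make sure the independence assumption on the entries of $u_\ell$ is used explicitly in the factorization $\mathbb E[e^{-\lambda S}]=\prod_k\mathbb E[e^{-\lambda X_k}]$. Mean-zero alone would not suffice for that step.
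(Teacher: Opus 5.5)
Your proof is correct and follows the paper's argument: both apply Hoeffding's bound to the independent, zero-mean, bounded summands $X_k=\operatorname{Re}(\overline{u_{\ell k}}w_k)$ of $\operatorname{Re}(u_\ell^Hw)$ with the one-sided threshold $t=k_{p_\ell}\sigma_w$, and then use $\operatorname{Re}(z)\ge 1\Rightarrow |z|\ge 1$. The only differences are that you derive Hoeffding's inequality from Hoeffding's lemma and a Chernoff bound instead of citing it, and you add the second-order-cone convexity check, which the paper leaves implicit.
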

\begin{proof}
Define the independent real random variables
\[
X_{\ell k}:=\operatorname{Re}\left(\overline{u_{\ell k}}w_k\right).
\]
Since $\mathbb E[u_{\ell k}]=0$, we have $\mathbb E[X_{\ell k}]=0$, and the support condition gives
\[
-l_{\ell k}|w_k|\le X_{\ell k}\le l_{\ell k}|w_k|.
\]
Moreover, $\operatorname{Re}(u_\ell^Hw)=\sum_{k=1}^{M}X_{\ell k}$.
Hence, Hoeffding's inequality yields, for every $t>0$,
\[
\mathbb P\left[\operatorname{Re}(u_\ell^Hw)\le-t\right]\le \exp\left(-\frac{t^2}{2\sum_{k=1}^{M}l_{\ell k}^2|w_k|^2}\right).
\]
Choosing
\[
t=\sqrt{-2\log(1-p_\ell)}\sqrt{\sum_{k=1}^{M}l_{\ell k}^2|w_k|^2}
\]
gives $\mathbb P\left[\operatorname{Re}(u_\ell^Hw)\ge-t\right]\ge p_\ell$. Therefore, the stated deterministic constraint ensures $\operatorname{Re}\left((\hat a_\ell+\delta_\ell)^Hw\right)\ge1$, with probability at least $p_\ell$. Since $\operatorname{Re}(z)\ge1$ implies $|z|\ge1$, the required chance constraint follows uniformly over all $\mathbb P\in\mathcal D_\ell^{\mathrm{disk}}$.
\end{proof}
\subsection{Data-Driven Reformulation Under Wasserstein Ambiguity}
In practical array processing, mismatch samples may be available from calibration data. We first recall the definition of the Wasserstein distance.
\begin{definition}[Wasserstein distance~\cite{xie2021distributionally}]\label{def:wass}
Let \(\mathcal{P}_\alpha(\Xi)\subseteq\mathcal{P}(\Xi)\) denote the set of Borel probability measures on \(\Xi\) with finite \(\alpha\)-th moment, where \(\alpha\in[1,\infty)\). Let \(d\) be a metric on \(\Xi\). The \(\alpha\)-Wasserstein distance between \(\mu,\nu\in\mathcal{P}_\alpha(\Xi)\) is defined as
\[
\bigl(W_\alpha(\mu,\nu)\bigr)^\alpha:=\inf_{\gamma\in\mathcal{H}(\mu,\nu)}\int_{\Xi\times\Xi}d^\alpha(\xi,\omega)\,\gamma(d\xi,d\omega),
\]
where \(\mathcal{H}(\mu,\nu)\) is the set of all probability measures on \(\Xi\times\Xi\) with marginals \(\mu\) and \(\nu\).
\end{definition}
In the sequel, we set \(\alpha=1\), take \(\Xi=\mathbb{C}^M\), and use the Euclidean ground metric \(d(\xi,\omega)=\|\xi-\omega\|_2\). Hence, the ambiguity set is defined using the \(1\)-Wasserstein distance. For each desired signal \(\ell\in[L]\), let \(\{\hat\delta_\ell^{(j)}\}_{j\in[N]}\) be \(N\) samples of the steering-vector mismatch and define the empirical distribution
\[
\widehat{\mathbb{P}}_{N,\ell}=\frac{1}{N}\sum_{j\in[N]}\mathsf{D}_{\hat\delta_\ell^{(j)}},
\]
where \(\mathsf{D}_x\) denotes the Dirac probability measure concentrated at \(x\). For any measurable set \(A\in\mathcal{B}(\mathbb{C}^M)\), where \(\mathcal{B}(\mathbb{C}^M)\) is the Borel sigma-algebra on \(\mathbb{C}^M\),
\[
\mathsf{D}_x(A)=
\begin{cases}
1, & x\in A,\\
0, & x\notin A.
\end{cases}
\]
Following Wasserstein distributionally robust chance-constrained approaches~\cite{8814677,xie2021distributionally,doi:10.1287/moor.2022.1275}, we consider the ambiguity set
\[
\mathcal{M}_{N,\ell}^{\epsilon}=\left\{\mathbb{P}\in\mathcal{P}_1(\mathbb{C}^M): W_1\left(\mathbb{P},\widehat{\mathbb{P}}_{N,\ell}\right)\leq\epsilon\right\},
\]
where \(\epsilon\geq0\) is the Wasserstein radius. 
Define
\[
F_\ell(w,\delta)=|\delta^Hw|-\operatorname{Re}(\hat a_\ell^Hw)+1.
\]
The worst-case CVaR approximation of the distributionally robust chance constraint is
\[
\inf_{t_\ell\in\mathbb{R}}\left\{\sup_{\mathbb{P}\in\mathcal{M}_{N,\ell}^{\epsilon}}\mathbb{E}_{\mathbb{P}}\left[\left(F_\ell(w,\delta)+t_\ell\right)_+\right]-t_\ell(1-p_\ell)\right\}\leq0.
\]
\begin{theorem}
A conservative convex reformulation of the Wasserstein distributionally robust LCMV problem, obtained through a worst-case CVaR approximation, where $j\in[N],\ \ell\in[L]$:
\begin{equation}
\begin{aligned}
\min_{w,\rho,\{t_\ell,s_{\ell j}\}}\quad & w^H\hat R_x w\\
\text{s.t.}\quad
& \left|(\hat\delta_\ell^{(j)})^H w\right|-\operatorname{Re}(\hat a_\ell^H w)
+1+t_\ell\leq s_{\ell j},\\
& \epsilon\rho+\frac{1}{N}\sum_{j\in[N]}s_{\ell j}\leq t_\ell(1-p_\ell),\\
& \|w\|_2\leq\rho,\quad t_\ell\geq0,\quad s_{\ell j}\geq0,
\end{aligned}
\label{eq:wasserstein_phase_lcmv}
\end{equation}
\end{theorem}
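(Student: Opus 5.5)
The plan has three steps. First, justify that the worst-case CVaR constraint is a conservative surrogate for the distributionally robust chance constraint. Second, dualize the worst-case expectation over the 1-Wasserstein ball using the Lipschitz structure of the loss. Third, linearize the resulting nonsmooth terms with epigraph variables.

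\textbf{Step 1: CVaR is conservative.} For any fixed $\mathbb{P}$ and any $t_\ell>0$, the pointwise bound
\[
\mathbb{1}\{F_\ell>0\}\le (F_\ell+t_\ell)_+/t_\ell
\]
gives
\[
\mathbb{P}[F_\ell(w,\delta)>0]\le \mathbb{E}_{\mathbb{P}}[(F_\ell+t_\ell)_+]/t_\ell .
\]
Hence, if the infimum over $t_\ell$ in the displayed CVaR constraint is $\le 0$, then $\mathbb{P}[F_\ell\le 0]\ge p_\ell$ for every $\mathbb{P}\in\mathcal{M}_{N,\ell}^{\epsilon}$. The boundary case $t_\ell\to 0$ is handled by restricting to $t_\ell\ge0$, which only shrinks the feasible set and so keeps conservativeness.

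The event $F_\ell\le0$ is exactly the safe event in \eqref{eq:phase_safe_chance}. By the reverse-triangle argument preceding \eqref{eq:phase_safe_chance}, it implies the original constraint in \eqref{eq:chance_lcmv}. Replacing the infimum over $t_\ell$ by existence of some $t_\ell\ge0$ satisfying the inequality gives the same or a smaller feasible set, so the reformulation remains safe.

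\textbf{Step 2: bound the worst-case expectation.} I would show
\[
\sup_{\mathbb{P}\in\mathcal{M}_{N,\ell}^{\epsilon}}\mathbb{E}_{\mathbb{P}}[(F_\ell(w,\delta)+t_\ell)_+]\le \frac1N\sum_{j}(F_\ell(w,\hat\delta_\ell^{(j)})+t_\ell)_+ + \epsilon\|w\|_2 .
\]
The key observation is that $\delta\mapsto (F_\ell(w,\delta)+t_\ell)_+$ is Lipschitz in the Euclidean metric on $\mathbb{C}^M\cong\mathbb{R}^{2M}$ with constant $\|w\|_2$. Indeed,
\[
\bigl||\delta^Hw|-|\delta'^Hw|\bigr|\le|(\delta-\delta')^Hw|\le\|\delta-\delta'\|_2\|w\|_2
\]
by Cauchy--Schwarz, the term $-\operatorname{Re}(\hat a_\ell^Hw)+1+t_\ell$ does not depend on $\delta$, and $(\cdot)_+$ is 1-Lipschitz.

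With this, take any coupling $\gamma$ of $\mathbb{P}$ and $\widehat{\mathbb{P}}_{N,\ell}$ attaining, or nearly attaining, $W_1\le\epsilon$. Then
\[
\mathbb{E}_{\mathbb{P}}[h]-\mathbb{E}_{\widehat{\mathbb{P}}}[h]=\int (h(\xi)-h(\omega))\,d\gamma\le\|w\|_2\int\|\xi-\omega\|_2\,d\gamma\le\|w\|_2\,\epsilon .
\]
This is just Kantorovich--Rubinstein in one direction, so it yields an upper bound, which is all conservativeness needs. No strong duality is required. This is the step I expect to need the most care, mainly in confirming that the complex-to-real identification preserves the metric; it does, since $\|\cdot\|_2$ on $\mathbb{C}^M$ is the real Euclidean norm on $\mathbb{R}^{2M}$.

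\textbf{Step 3: epigraph variables and convexity.} Combining the two steps, it suffices that some $t_\ell\ge0$ satisfies
\[
\epsilon\|w\|_2+\frac1N\sum_j(F_\ell(w,\hat\delta_\ell^{(j)})+t_\ell)_+\le t_\ell(1-p_\ell).
\]
Introduce $s_{\ell j}\ge0$ with $s_{\ell j}\ge F_\ell(w,\hat\delta_\ell^{(j)})+t_\ell$, and $\rho\ge\|w\|_2$. Since $\epsilon\ge0$, these epigraph relaxations are valid: any feasible point of \eqref{eq:wasserstein_phase_lcmv} satisfies the displayed inequality, hence the CVaR constraint, hence the chance constraint. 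A single $\rho$ can be shared across $\ell$.

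Convexity follows because $|(\hat\delta_\ell^{(j)})^Hw|$ and $\|w\|_2$ are second-order-cone representable, $\operatorname{Re}(\hat a_\ell^Hw)$ is linear in the real and imaginary parts of $w$, and $w^H\hat R_xw$ is convex since $\hat R_x\succeq0$. This establishes the claimed conservative convex SOCP reformulation.
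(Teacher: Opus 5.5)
Your proof is correct. Its skeleton matches the paper's: the CVaR surrogate, the Lipschitz constant $\|w\|_2$ of $\delta\mapsto(F_\ell(w,\delta)+t_\ell)_+$, the epigraph variables $s_{\ell j}$, and a shared $\rho$. The key step, however, is handled differently.

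The paper invokes Wasserstein strong duality, after checking a linear-growth condition, to write the worst-case expectation as $\inf_{\rho_\ell\ge0}\bigl\{\epsilon\rho_\ell+\frac1N\sum_j\sup_{\delta}[H_\ell(w,\delta)-\rho_\ell\|\delta-\hat\delta_\ell^{(j)}\|_2]\bigr\}$. It then uses the Lipschitz property to collapse the inner supremum to $H_\ell(w,\hat\delta_\ell^{(j)})$ for $\rho_\ell\ge\|w\|_2$. You instead bound $\mathbb{E}_{\mathbb{P}}[h]-\mathbb{E}_{\widehat{\mathbb{P}}_{N,\ell}}[h]$ directly through a coupling, i.e., the easy half of Kantorovich--Rubinstein.

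Your route is more elementary and self-contained, and it correctly isolates that only an upper bound is needed. In fact the paper's chain also uses only the $\le$ direction: it evaluates the dual objective at some $\rho\ge\|w\|_2$, which is weak duality.

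What the duality route buys is exactness. Since $H_\ell(w,\hat\delta_\ell^{(j)}+s\,w/\|w\|_2)$ grows like $s\|w\|_2$, the inner supremum is $+\infty$ for $\rho<\|w\|_2$. Strong duality then gives $\sup_{\mathbb{P}\in\mathcal{M}_{N,\ell}^{\epsilon}}\mathbb{E}_{\mathbb{P}}[H_\ell]=\epsilon\|w\|_2+\frac1N\sum_jH_\ell(w,\hat\delta_\ell^{(j)})$. So the Wasserstein step adds no conservatism beyond the CVaR and reverse-triangle approximations. Your one-sided argument does not reveal this, but the theorem only claims conservativeness, so nothing is missing.

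Your Step 1 also makes explicit, via the pointwise Markov bound, what the paper dispatches as ``monotonicity of CVaR''. The one loose end there is $t_\ell=0$, where dividing by $t_\ell$ is not allowed. In that case the condition reads $\sup_{\mathbb{P}}\mathbb{E}_{\mathbb{P}}[(F_\ell)_+]\le0$, which forces $F_\ell\le0$ almost surely, so the case is harmless.
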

\begin{proof}
By monotonicity of the function $(\cdot)_+$ and CVaR, it is sufficient to impose the worst-case CVaR condition associated with \(F_\ell\). Following the worst-case CVaR epigraph reformulation developed in~\cite{madani2026wasserstein}, a conservative sufficient condition is the existence of \(t_\ell\in\mathbb{R}\) such that
\[
\sup_{\mathbb{P}\in\mathcal{M}_{N,\ell}^{\epsilon}}\mathbb{E}_{\mathbb{P}}\left[\left(F_\ell(w,\delta_\ell)+t_\ell\right)_+\right]\leq t_\ell(1-p_\ell).
\]
Define \(H_\ell(w,\delta)=\left(F_\ell(w,\delta)+t_\ell\right)_+\). Since \(H_\ell(w,\delta)\) is continuous and 
\[
\left|H_\ell(w,\delta_1)-H_\ell(w,\delta_2)\right|\leq\|w\|_2\,\|\delta_1-\delta_2\|_2.
\]
then \(\|w\|_2\)-Lipschitz in \(\delta\), it satisfies the linear-growth condition in \cite{8814677} which required for strong duality when \(\alpha=1\). Therefore, by the Wasserstein strong duality result in~\cite{doi:10.1287/moor.2022.1275,madani2026wasserstein}, we obtain
\begin{align*}
&\sup_{\mathbb{P}\in\mathcal{M}_{N,\ell}^{\epsilon}}\mathbb{E}_{\mathbb{P}}[H_\ell(w,\delta)]\\
&=\inf_{\rho_\ell\geq0}\left\{\epsilon\rho_\ell+\frac{1}{N}\sum_{j=1}^{N}\sup_{\delta\in\mathbb{C}^M}\left[H_\ell(w,\delta)-\rho_\ell\|\delta-\hat\delta_\ell^{(j)}\|_2\right]\right\}.
\end{align*}
For any \(\delta_1,\delta_2\in\mathbb{C}^M\),
\[
\left||\delta_1^Hw|-|\delta_2^Hw|\right|\leq |(\delta_1-\delta_2)^Hw|\leq \|\delta_1-\delta_2\|_2\|w\|_2.
\]
Since the positive-part operator is nonexpansive, \(H_\ell(w,\delta)\) is Lipschitz continuous in \(\delta\) with Lipschitz constant \(\|w\|_2\). Therefore, whenever \(\rho_\ell\geq\|w\|_2\), 
\[
\sup_{\delta\in\mathbb{C}^M}\left[H_\ell(w,\delta)-\rho_\ell\|\delta-\hat\delta_\ell^{(j)}\|_2\right]= H_\ell\left(w,\hat\delta_\ell^{(j)}\right).
\]
Introducing \(s_{\ell j}\geq0\) such that \(H_\ell(w,\hat\delta_\ell^{(j)})\leq s_{\ell j}\) is equivalent to
\[
\left|(\hat\delta_\ell^{(j)})^Hw\right|-\operatorname{Re}(\hat a_\ell^Hw)+1+t_\ell\leq s_{\ell j},\qquad s_{\ell j}\geq0.
\]
Hence, the worst-case CVaR condition is satisfied if
\[
\epsilon\rho_\ell+\frac{1}{N}\sum_{j=1}^{N}s_{\ell j}\leq t_\ell(1-p_\ell),\qquad \|w\|_2\leq\rho_\ell.
\]
Since the left-hand side of the first inequality is nonnegative and \(p_\ell\in(0,1)\), feasibility implies \(t_\ell\geq0\). Moreover, all loss functions have the same Lipschitz constant \(\|w\|_2\); thus, a common variable \(\rho\) may be used for all \(\ell\in[L]\).
\end{proof}
\section{Numerical Results}\label{sec4}
The numerical experiments were performed on a computer equipped with an 11th-generation Intel Core i7-1185G7 processor and 32 GB of RAM. The optimization problems were implemented in Python using CVXPY and solved with the MOSEK solver. All the results presented in this section correspond to the individual chance-constrained formulations.

We consider a uniform linear array composed of \(M=12\) antennas with half-wavelength spacing. Four desired signals arrive from $\theta_{\rm des}=[-60^\circ,-10^\circ,12^\circ,30^\circ]$, while three interfering signals arrive from $\theta_{\rm int}=[-30^\circ,20^\circ,60^\circ]$. The additive noise is modeled as a zero-mean complex Gaussian process with covariance matrix \(\boldsymbol{R}_n=\boldsymbol{I}_M\), and the interference-to-noise ratio of each interferer is fixed at INR\(=20\,\mathrm{dB}\). Unless stated otherwise, the sample covariance matrix is estimated using \(K=100\) training snapshots. Steering-vector uncertainty is represented by a complex Gaussian perturbation with a small nonzero mean induced by a direction-of-arrival bias. The standard deviations are drawn from \([0.08,0.5]\), while the angular biases are drawn from \([-0.5^\circ,0.5^\circ]\). A moderately inflated covariance model is used in the robust formulations to avoid underestimating the mismatch level. The prescribed marginal reliability is \(p=0.9\). The Wasserstein formulation uses \(50\) mismatch samples and radius \(\epsilon=0.01\), unless stated otherwise. Out-of-sample reliability is estimated using \(1000\) independent mismatch realizations. The reported results are averaged over \(100\) independent Monte Carlo runs. Solid curves indicate the empirical means, while the shaded regions represent one standard deviation.

\subsection{Output SINR Performance and Beampattern}
Figure~\ref{fig:sinr_individual} compares the average output SINR of the nominal LCMV beamformer with the Gaussian, moment-based, support-based, and Wasserstein robust designs.  The total input SNR is varied from \(-20\,\mathrm{dB}\) to \(20\,\mathrm{dB}\). An unconstrained SINR upper bound computed using the true second-order statistics is also included as a performance benchmark. At low input SNR, all methods exhibit similar performance because the output is primarily limited by the noise power. As the input SNR increases, the robust formulations consistently outperform the nominal LCMV beamformer. This difference becomes particularly visible at moderate and high SNR values, where steering-vector mismatch causes a significant loss for the nominal design. Among the robust approaches, the Gaussian and Wasserstein formulations provide the highest average output SINR, whereas the moment- and support-based formulations are slightly
more conservative.
\begin{figure}[!t]
    \centering
    \includegraphics[width=\linewidth]{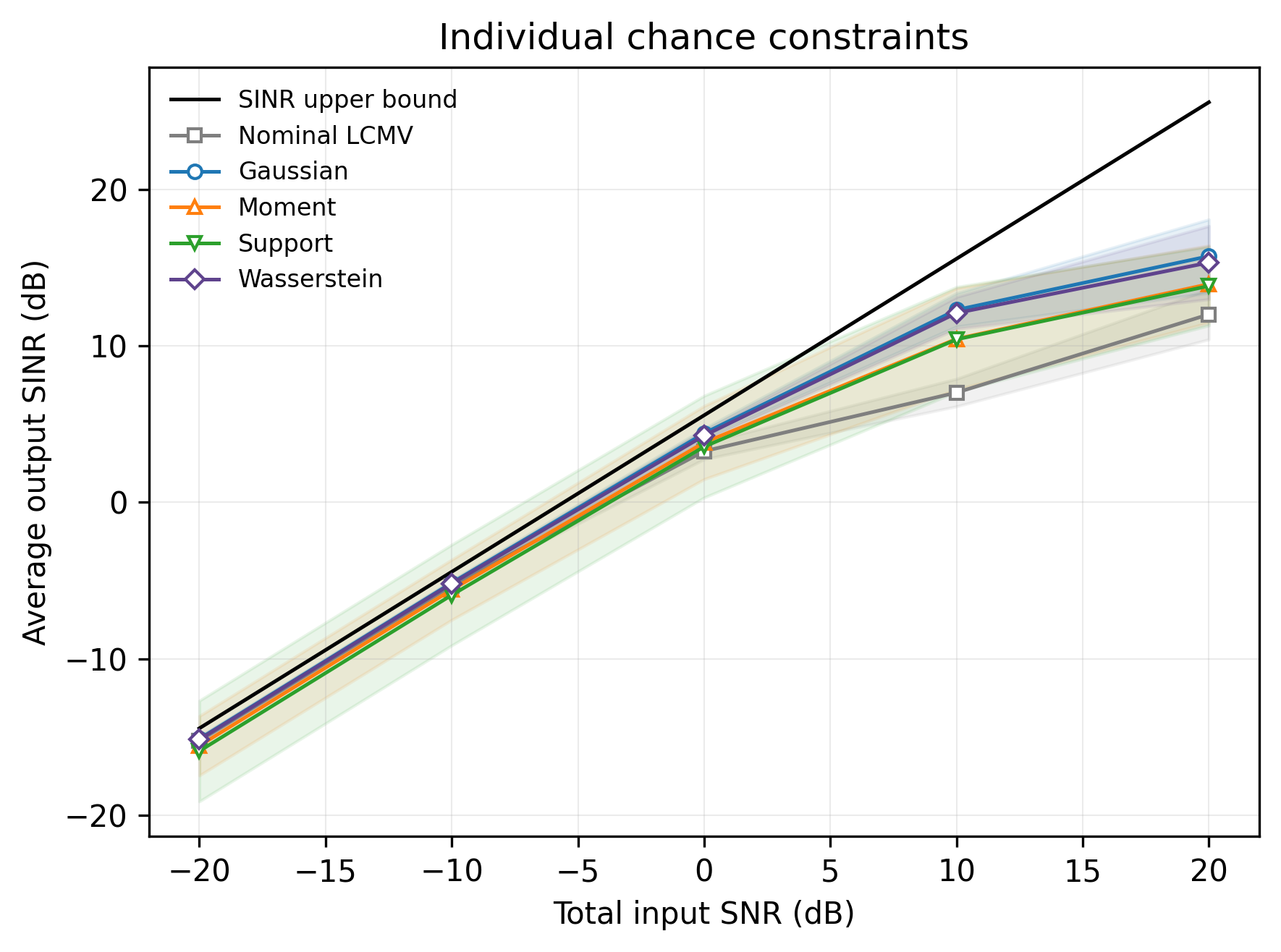}
    \caption{Average output SINR versus total input SNR.}
    \label{fig:sinr_individual}
\end{figure}

Figure~\ref{fig:sinr_snapshots} studies the effect of the number of training snapshots at a fixed total input SNR of \(10\,\mathrm{dB}\). The number of snapshots is varied from \(10\) to \(100\). The output SINR improves rapidly when the number of snapshots increases from \(10\) to approximately \(50\), since the sample covariance matrix becomes more accurately estimated. Beyond this range, the improvement becomes more gradual. The robust beamformers maintain a clear advantage over the nominal LCMV design for all considered sample sizes. In particular, the Gaussian and Wasserstein methods achieve the highest output SINR, approaching \(12\,\mathrm{dB}\) when many snapshots are available.
\begin{figure}[!t]
    \centering
    \includegraphics[width=\linewidth]{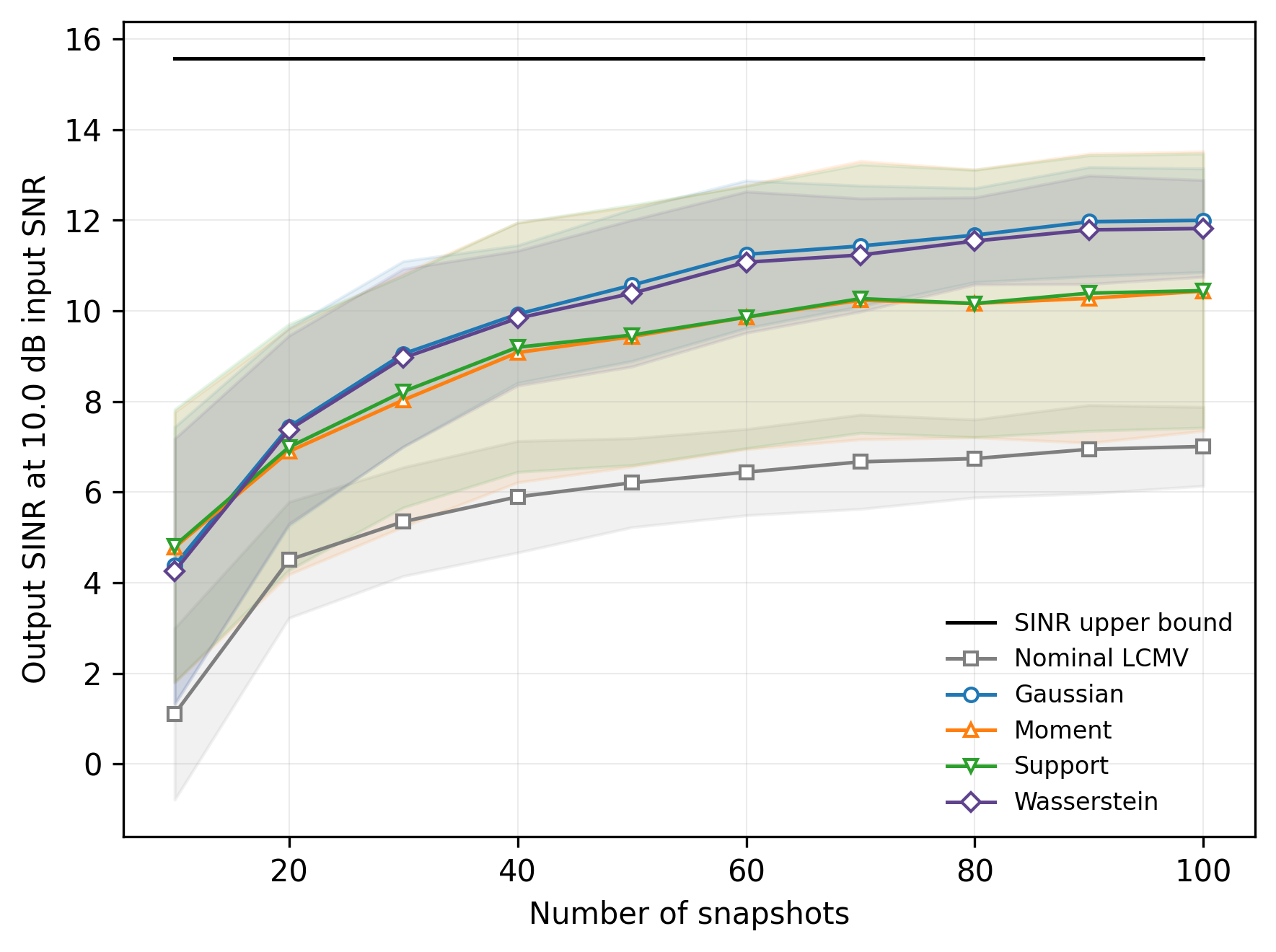}
    \caption{Average output SINR versus the number of training snapshots
    at a total input SNR of \(10\,\mathrm{dB}\).}
    \label{fig:sinr_snapshots}
\end{figure}

To illustrate the spatial filtering behavior, we evaluate the array beampattern over the angular grid \(\Theta_{\rm grid}=[-90^\circ,90^\circ]\). For a beamforming vector \(w\), the beampattern in decibels is defined as $B_{\rm dB}(\theta)= 20\log_{10}\left|a^H(\theta)w\right|$. Figure~\ref{fig:beampattern_individual} shows the average beampatterns at a total input SNR of \(20\,\mathrm{dB}\). Black dotted lines indicate the desired directions, while red dashed lines indicate the interference directions. The robust formulations provide positive response margins around the desired directions while producing strong attenuation near the interference directions. In contrast, the nominal LCMV beamformer enforces unit responses only at the presumed desired steering vectors and therefore provides less protection against steering-vector mismatch. The Gaussian and Wasserstein formulations generally produce larger response margins around the desired directions, whereas the moment- and support-based formulations exhibit more conservative spatial responses. 
\begin{figure}[!t]
    \centering
    \includegraphics[width=\linewidth]{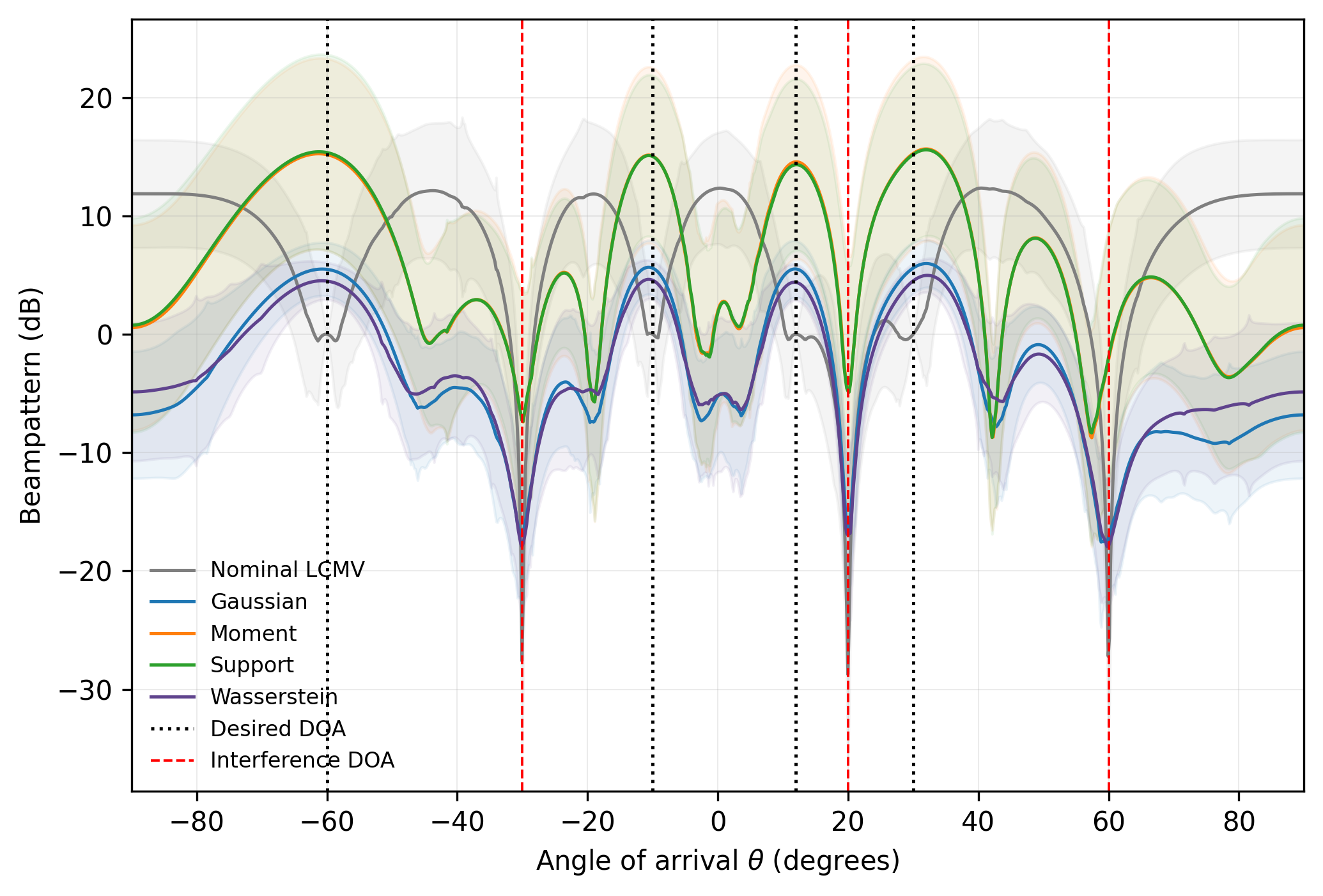}
    \caption{Average beampattern in decibels at a total input SNR of \(20\,\mathrm{dB}\).}
    \label{fig:beampattern_individual}
\end{figure}

\subsection{Reliability and Sensitivity Analysis}
Figure~\ref{fig:oos_individual} presents the out-of-sample satisfaction probabilities at a total input SNR of \(20\,\mathrm{dB}\). The nominal LCMV beamformer satisfies the response constraints in only approximately half of the out-of-sample realizations and therefore fails to attain the prescribed reliability. By contrast, all robust formulations exceed the target \(p=0.9\). Their satisfaction probabilities are close to one, showing that the proposed safe reformulations provide strong protection against steering-vector mismatch.
\begin{figure}[!t]
    \centering
    \includegraphics[width=\linewidth]{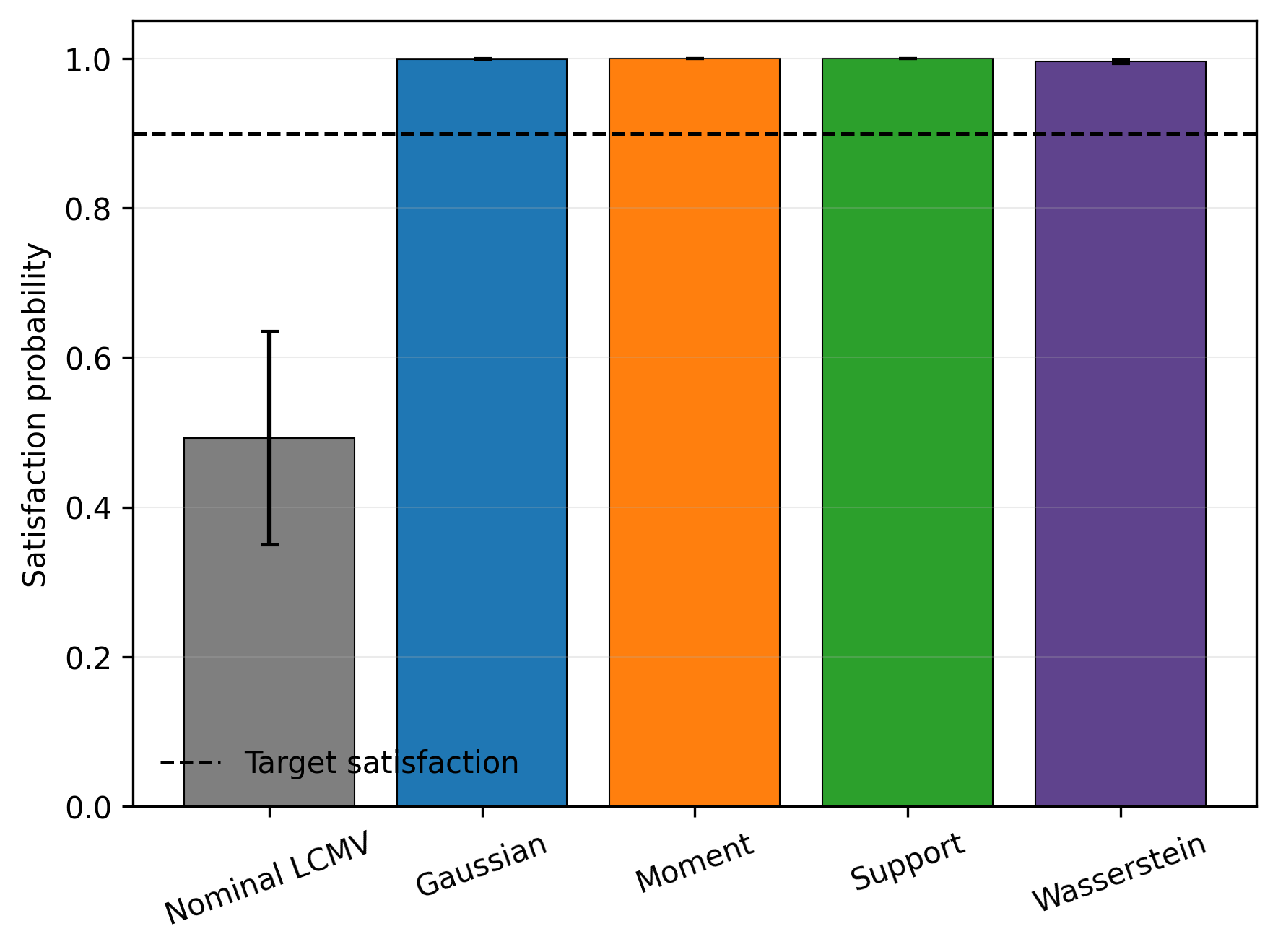}
    \caption{Minimum out-of-sample satisfaction probability at SNR \(=20\,\mathrm{dB}\). }
    \label{fig:oos_individual}
\end{figure}

Figure~\ref{fig:confidence_sweep} examines the effect of varying the prescribed reliability level \(p\) at a total input SNR of \(20\,\mathrm{dB}\). The left panel reports the average output SINR, whereas the right panel reports the corresponding minimum out-of-sample satisfaction probability.  The nominal LCMV solution is independent of \(p\), and its performance therefore remains constant throughout the experiment. The robust methods maintain an output SINR of approximately \(16\,\mathrm{dB}\) over a broad range of reliability levels. A noticeable performance reduction appears only when \(p\) approaches one, particularly for the moment, support, and Wasserstein formulations, because the corresponding constraints become increasingly restrictive. The right panel confirms that the robust solutions achieve satisfaction levels above the prescribed target. The Gaussian approximation follows the reliability requirement more closely, while the moment- and support-based formulations remain close to full satisfaction even for relatively small values of \(p\), reflecting their more conservative nature.
\begin{figure*}[!t]
    \centering
    \includegraphics[width=\linewidth]
    {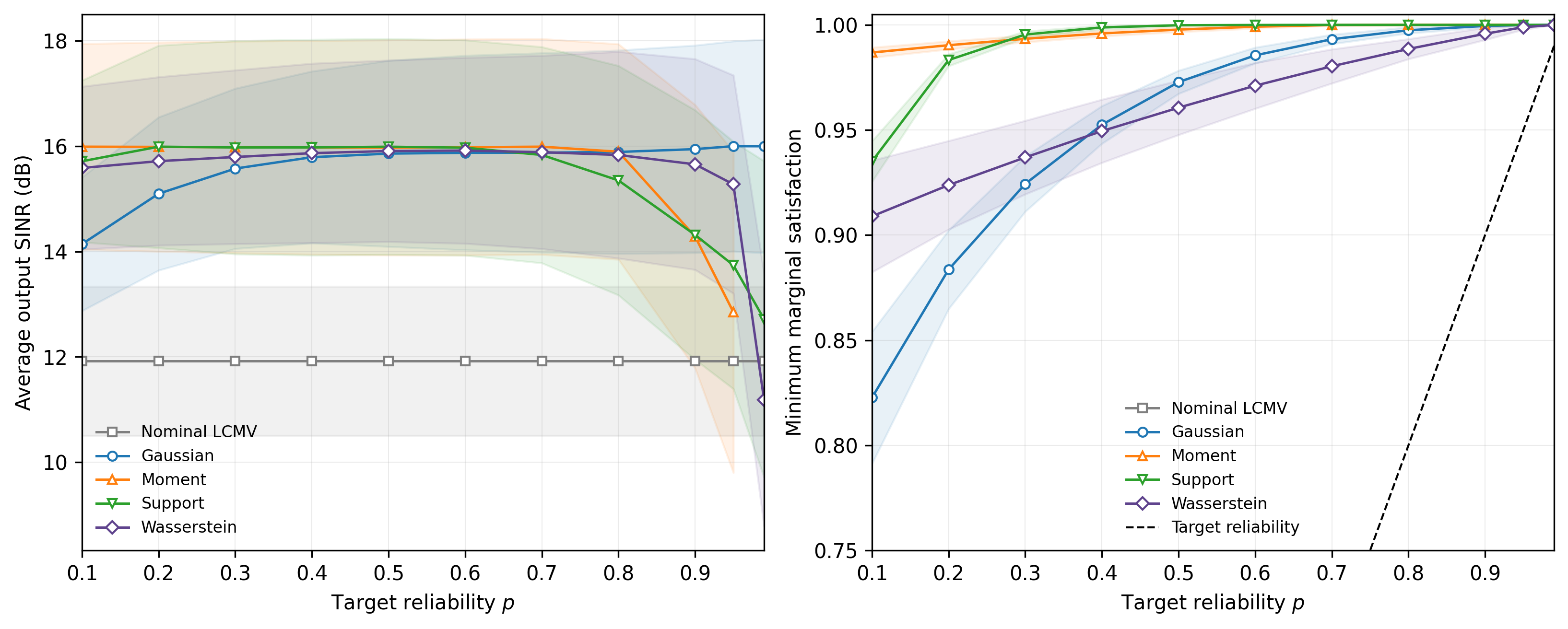}
    \caption{Effect of the prescribed reliability level at a total input SNR of \(20\,\mathrm{dB}\): average output SINR (left) and out-of-sample satisfaction probability (right). }
    \label{fig:confidence_sweep}
\end{figure*}

Finally, Figure~\ref{fig:wasserstein_radius} investigates the sensitivity of the Wasserstein formulation to the ambiguity-set radius and the number of mismatch samples. The experiment is conducted at a total input SNR of \(20\,\mathrm{dB}\), with Wasserstein sample sizes \(N\in\{25,50,100,200\}\). For small radii, the output SINR remains nearly unchanged, whereas increasing the radius to \(0.1\) produces a visible performance reduction. This behavior is expected because a larger Wasserstein ball protects against a wider set of probability distributions and therefore yields a more conservative beamformer. At the same time, the worst marginal violation probability decreases as the radius increases and becomes nearly zero for sufficiently large radii. Notably, satisfactory output SINR and reliability are already achieved with only \(N=25\) mismatch samples, while increasing the sample size provides only modest additional improvements, particularly for small ambiguity radii.
\begin{figure*}[!t]
    \centering
    \includegraphics[width=\linewidth]{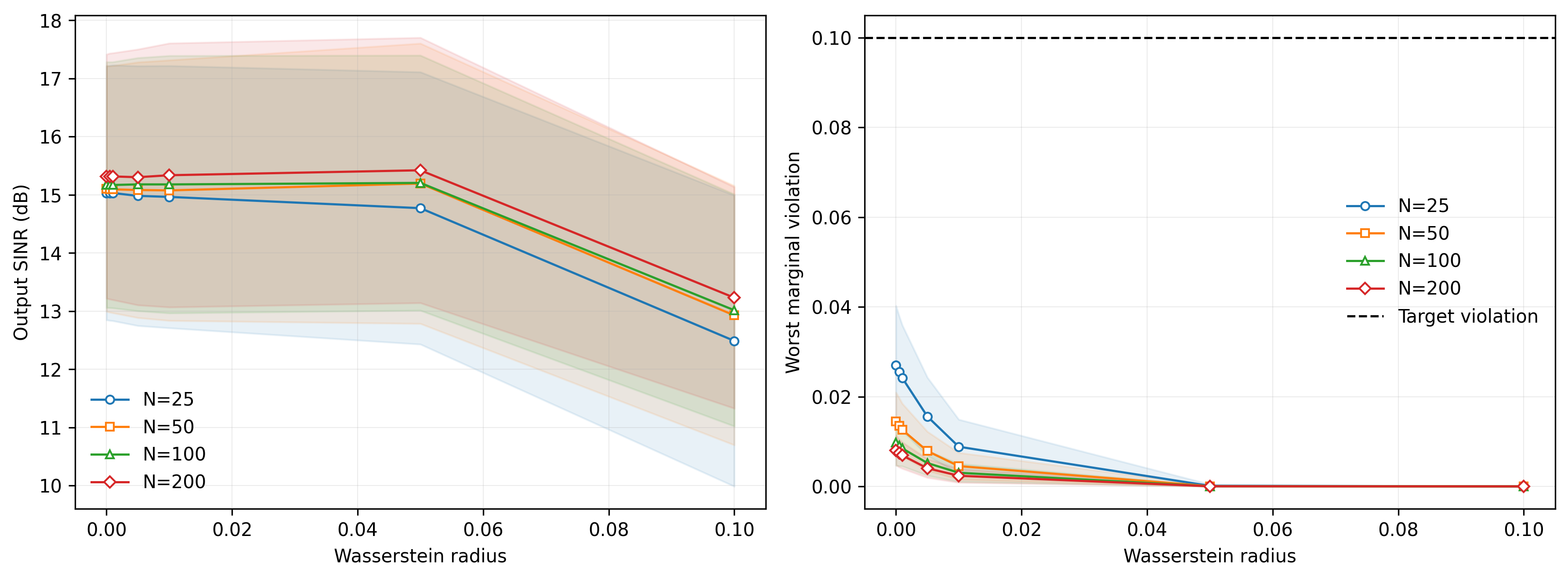}
    \caption{Effect of the Wasserstein radius and the number of mismatch samples at a total input SNR of \(20\,\mathrm{dB}\): average output SINR (left) and worst marginal violation probability (right).}
    \label{fig:wasserstein_radius}
\end{figure*}
\section{Conclusion}\label{sec5}
This paper studied robust LCMV beamforming under steering-vector mismatch using complex-valued chance constraints. In contrast to nominal LCMV designs, which enforce deterministic response constraints using presumed steering vectors, the proposed framework imposes probabilistic constraints on the uncertain array responses. This allows the desired response amplitudes to be controlled at prescribed individual reliability levels for multiple desired signals. Deterministic safe reformulations were derived under exact Gaussian information, moment-based ambiguity, support-based uncertainty, and Wasserstein data-driven ambiguity. The resulting formulations provide different levels of modeling flexibility, ranging from parametric uncertainty descriptions to fully data-driven ambiguity sets. The resulting convex programs can be solved efficiently using standard conic optimization tools. The numerical results demonstrate that steering-vector mismatch may cause substantial out-of-sample constraint violations for the nominal LCMV beamformer, even when its output SINR remains relatively high. In contrast, the proposed robust formulations attain the prescribed reliability levels while maintaining competitive output SINR. The Gaussian and Wasserstein formulations provide a favorable balance between performance and robustness, whereas the moment- and support-based models generally exhibit more conservative behavior. The experiments also show that increasing the number of training snapshots improves covariance estimation and output SINR. Increasing the reliability level $p$ or the Wasserstein radius introduces the robustness--performance trade-off. 
\bibliographystyle{IEEEtran}
\bibliography{bib}
\end{document}